\title{Faster Deterministic Modular Subset Sum}
\author{Krzysztof Potępa}{Jagiellonian University, Kraków}{krzysztof.potepa@student.uj.edu.pl}{}{}
\authorrunning{K. Potępa}
\keywords{Modular Subset Sum, String Problem, Segment Tree, Data Structure}
\newtheorem{invariant}[theorem]{Invariant}
\newcommand\floor[1]{\left\lfloor#1\right\rfloor}
\newcommand\bigO{\mathcal{O}}
\newcommand\Init{\textnormal{\textsc{Init}}}
\newcommand\Set{\textnormal{\textsc{Set}}}
\newcommand\Shift{\textnormal{\textsc{Shift}}}
\newcommand\Diff{\textnormal{\textsc{Diff}}}
\newcommand\Update{\textnormal{\textsc{Update}}}
\newcommand\FindDifferences{\textnormal{\textsc{FindDifferences}}}
\newcommand\NewTag{\textnormal{\textsc{NewTag}}}
\newcommand\Find{\textnormal{\textsc{Find}}}
\newcommand\Union{\textnormal{\textsc{Union}}}
\newcommand\DeleteTag{\textnormal{\textsc{DeleteTag}}}
\DeclareMathOperator{\Level}{level}
\DeclareMathOperator{\Skew}{skew}
\DeclareMathOperator{\Left}{left}
\DeclareMathOperator{\Right}{right}
\DeclareMathOperator{\Parent}{parent}
\DeclareMathOperator{\Str}{str}
\DeclareMathOperator{\Bitrev}{bitrev}
\patchcmd{\ALG@doentity}{\item[]\nointerlineskip}{}{}{}
\begin{document}

\maketitle

\begin{abstract}
    We consider the \emph{Modular Subset Sum} problem: given a multiset $X$ of integers from $\mathbb{Z}_m$ and a target integer $t$,
    decide if there exists a subset of $X$ with a sum equal to $t \pmod{m}$.
    Recent independent works by Cardinal and Iacono (SOSA'21), and Axiotis et al.\ (SOSA'21) provided simple and near-linear algorithms for this problem.
    Cardinal and Iacono gave a randomized algorithm that runs in $\bigO(m \log m)$ time,
    while Axiotis et al.\ gave a deterministic algorithm that runs in $\bigO(m \text{ polylog } m)$ time.
    Both results work by reduction to a text problem, which is solved using a dynamic strings data structure.

    In this work, we develop a simple data structure, designed specifically to handle the text problem that arises in the algorithms for Modular Subset Sum.
    Our data structure, which we call the \emph{shift-tree}, is a simple variant of a segment tree.
    We provide both a hashing-based and a deterministic variant of the \emph{shift-trees}.

    We then apply our data structure to the Modular Subset Sum problem and obtain two algorithms.
    The first algorithm is Monte-Carlo randomized and matches the $\bigO(m \log m)$ runtime of the Las-Vegas algorithm by Cardinal and Iacono.
    The second algorithm is fully deterministic and runs in $\bigO(m \log m \cdot \alpha(m))$ time, where $\alpha$ is the inverse Ackermann function.
\end{abstract}

\section{Introduction}
\label{sec:introduction}

The \emph{Subset Sum} is a fundamental problem in computer science.
It is defined as follows: given a multiset $X$ of $n$ positive integers and a target integer $t$,
decide if there exists a subset of $X$, such that the sum of its elements is exactly $t$.
The problem is known to be NP-complete \cite{Karp1972}, but only in a weak sense:
a classic dynamic programming approach of Bellman \cite{Bellman1957} solves it in pseudo-polynomial $\bigO(nt)$ time.
In recent years, there has been a lot of research towards improving the runtime \cite{Koiliaris2017,Koiliaris2019,Bringmann2017,Jin2019},
which culminated in near-linear $\widetilde{\bigO}(n+t)$ algorithms \cite{Bringmann2017,Jin2019}.%
\footnote{By writing $\widetilde{\bigO}(f(n))$, we mean $\bigO(f(n) \text{ polylog } f(n))$.}

In this work, we focus on the \emph{Modular Subset Sum} problem.
The \emph{Modular Subset Sum} is a natural variant of the Subset Sum problem, where all sums are taken modulo $m$, for some given modulus $m$.
We assume that the input multiset $X$ is provided in a compact form: as a list of $\bigO(m)$ distinct elements along with their multiplicities.
This assumption allows us to omit dependence on the number of elements $n$ in algorithm complexities.
Moreover, we focus on algorithms that return all possible subset sums, i.e. a set of all attainable values of $t$.

The dynamic programming of Bellman \cite{Bellman1957} can be easily adapted to solve the modular case in $\bigO(nm)$ time.
Let $S_i$ be the set of all attainable subset sums using only the first $i$ elements.
Bellman's algorithm iteratively computes the sets $S_1, ..., S_n$ using formula $S_i = S_{i-1} \cup (S_{i-1} + x_i)$,
where $x_i$ is the $i$-th input element and $S_{i-1} + x_i = \{a+x_i : a \in S_{i-1}\}$.
Most of the currently known improved algorithms simply simulate the consecutive iterations of Bellman's algorithm faster.
An early notable exception is the $\widetilde{\bigO}(m^{5/4})$ algorithm of Koiliaris and Xu \cite{Koiliaris2017},
which uses a divide-and-conquer approach based on results from number theory.

Abboud et al.\ \cite{Abboud2019} obtained a SETH-based conditional lower bound for the Subset Sum problem,
which in particular implies that the Modular Subset Sum cannot be solved in $\bigO(m^{1-\varepsilon})$ time for any $\varepsilon > 0$.
The first randomized algorithm that matched their lower-bound (up to subpolynomial factors) was introduced by Axiotis et al.\ in \cite{Axiotis2019}.
They achieved a running time of $\bigO(m \log^7 m)$ by simulating Bellman's dynamic programming faster using ideas from linear sketching.

Recently, simple and practical algorithms were provided independently in \cite{Cardinal2021,Axiotis2021}.
Both results work by reducing the problem of computing Bellman's iteration to a text problem, but use different data structures to solve it efficiently.
A Las-Vegas randomized $\bigO(m \log m)$ algorithm by Cardinal and Iacono \cite{Cardinal2021}
uses the dynamic strings data structure of Gawrychowski et al.\ \cite{Gawrychowski2018}.
The authors also introduced a simpler alternative, called \emph{Data Dependent Trees}, with logarithmic bounds per operation.
On the other hand, Axiotis et al.\ \cite{Axiotis2021} obtained a deterministic $\bigO(m \text{ polylog } m)$ algorithm
by employing a deterministic data structure of Mehlhorn et al.\ \cite{Mehlhorn1997} instead.
More precisely, their algorithm is output-sensitive and works in $\bigO(|X^*| \text{ polylog } |X^*|)$ time, where $X^*$ is the set of all attainable subset sums.
The authors provided also a very simple, randomized $\bigO(m \log^2 m)$ algorithm that uses only an elementary prefix sum structure.

A very recent result of Bringmann and Nakos \cite{Bringmann2021} provides near-linear algorithms
for computing the sumset $A_1 + ... + A_n$, for $A_1, ..., A_n \subseteq \mathbb{Z}_m$.
This problem generalizes the Modular Subset Sum: the set of all attainable subset sums can be expressed as a sumset $\{0, x_1\} + ... + \{0, x_n\}$.

\subsection{Our contributions}

In this work, we develop a simple tree-based data structure, designed specifically to handle the text problem that arises in the algorithms for Modular Subset Sum.
Our data structure, which we call a \emph{shift-tree}, maintains a string $s$ under the following operations:
\begin{romanenumerate}
    \item change a single character of $s$;
    \item cyclically shift $s$ by $k$ positions;
    \item given another string $t$ with its corresponding shift-tree, and an interval $[a;b]$, list all positions in $[a;b]$ where strings $s$ and $t$ differ.
\end{romanenumerate}
We provide two variants of the data structure: a hashing-based one,
and a deterministic one with slightly worse time complexity (by $\alpha(n)$, where $\alpha$ is the inverse Ackermann function).
By applying shift-trees to the Modular Subset Sum problem, we obtain the following algorithms:

\begin{theorem}[name=,restate=MainTheoremRandomized]
    There exists an algorithm that returns all attainable modular subset sums of a multiset of integers from $\mathbb{Z}_m$ with high probability,
    in time $\bigO(m \log m)$ and space $\bigO(m)$.
\end{theorem}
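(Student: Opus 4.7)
The plan is to simulate Bellman's recurrence $S_i = S_{i-1} \cup (S_{i-1} + x_i)$ on the length-$m$ characteristic bit-string of the current subset-sum set, using two shift-trees to locate the newly attained positions in each step. I handle multiplicities by iterating each distinct element $x$ with multiplicity $\mu_x$ up to $\mu_x$ times, but skipping $x$ permanently after the first iteration in which no new element is added: once $S+x \subseteq S$ at some iteration, the inclusion persists since $S$ only grows. Because each productive iteration enlarges $S$, there are at most $m$ productive iterations, plus at most one wasted iteration per distinct element, giving $N = \bigO(m)$ Bellman updates in total.

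I maintain two shift-trees $T_1, T_2$ both encoding the current bit-string, initialised to the indicator of $\{0\}$. For the $i$-th update I call $\Shift(T_2, x_i)$, so that $T_2$ momentarily represents the cyclic shift $S + x_i$; then $\FindDifferences(T_1, T_2, 0, m-1)$ lists every position $p$ where the two strings differ. Every such $p$ with $T_1[p]=0$ is a newly attained residue, which I record in the output. I then call $\Shift(T_2, -x_i)$ to restore $T_2$ to its original contents, and finally apply $\Set(T_1, p, 1)$ and $\Set(T_2, p, 1)$ for each recorded $p$ so that both trees again represent $S_i$. Correctness is immediate from the recurrence.

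The main work is the time analysis. Each $\Shift$ and $\Set$ costs $\bigO(\log m)$ from the shift-tree guarantees, and there are $\bigO(N)=\bigO(m)$ shifts and at most $m$ sets overall, since every residue is attained at most once. For $\FindDifferences$, whose cost is $\bigO((1+D)\log m)$ on $D$ reported positions, the key amortisation comes from the fact that a cyclic shift preserves the number of ones: in iteration $i$ the number of $(0,1)$-mismatches equals the number of $(1,0)$-mismatches, so $D_i = 2\,|S_i \setminus S_{i-1}|$ and $\sum_i D_i \le 2m$. Combining gives total time $\bigO(m \log m)$ in $\bigO(m)$ space. The high-probability guarantee is inherited from the hashing-based $\FindDifferences$: a union bound over the $\bigO(m)$ calls, with hash codomain polynomial in $m$, absorbs the error. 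The main obstacle I foresee is precisely this amortised $\FindDifferences$ bookkeeping, together with verifying that the interleaving Shift/Set/un-Shift really leaves $T_2$ in sync with $T_1$; once the shift-tree interface is fixed, both reduce to short invariant arguments.
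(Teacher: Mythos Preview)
Your proposal has a genuine gap in the time analysis: you assert that ``each $\Shift$ and $\Set$ costs $\bigO(\log m)$ from the shift-tree guarantees,'' but this is not what the data structure provides. The complexity of $\Shift(k)$ is $\bigO(m/2^j)$, where $j$ is the largest integer with $2^j \mid k$; in particular, a shift by any odd amount costs $\Theta(m)$, not $\bigO(\log m)$. Your algorithm performs two shifts (by $x_i$ and $-x_i$) per Bellman iteration, with $\bigO(m)$ iterations and arbitrary shift amounts, so the total shift cost can be $\Theta(m^2)$. This is precisely the bottleneck the paper identifies: if the elements are processed in an arbitrary order, the shifts do not amortise.

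The paper's fix is the essential missing idea. Rather than shifting and un-shifting per element, it traverses \emph{all} cyclic shifts of $T_2$ once, in bit-reversal order, and processes whichever input element (if any) corresponds to the current shift. Lemma~\ref{lem:shift-amortization} shows that consecutive shifts in bit-reversal order differ by amounts whose $2$-adic valuations are large often enough that the total shift work is $\bigO(m\log m)$. A secondary issue you also skip is that the shift-tree requires the string length to be a power of two; the paper handles this by padding to length $L \ge 2m$ with the auxiliary strings $s' = s0^{L-m}$ and $s'' = s0^{L-2m}s$, so that a length-$m$ prefix of $(s'')^{+x}$ equals $s^{+x}$ for $x \in \{0,\dots,m-1\}$. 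Your $\Diff$ amortisation (via $\sum_i D_i \le 2m$) and your handling of multiplicities are fine and match the paper; the failure is entirely in the shift accounting.
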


\begin{theorem}[name=,restate=MainTheoremDeterministic]
    There exists a deterministic algorithm that returns all attainable modular subset sums of a multiset of integers from $\mathbb{Z}_m$
    in time $\bigO(m \log m \cdot \alpha(m))$ and space $\bigO(m)$.
\end{theorem}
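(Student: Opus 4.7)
The plan is to simulate Bellman's recurrence $S \leftarrow S \cup (S + x)$ on top of the deterministic shift-tree. I will maintain two shift-trees of length $m$: $A$ storing the characteristic string of the current set of reachable sums $S$, and $B$ a cyclic shift of $A$ by an explicit offset $k$ satisfying the invariant $B[i] = A[(i-k) \bmod m]$. Both are initialized to the indicator of $\{0\}$. I process each distinct input $x$ (with multiplicity $c_x$) \emph{exactly once}, performing up to $c_x$ rounds; each round consists of (i) a \Shift on $B$ to re-align its offset to $x$ (so that $B$ represents $S + x$), (ii) a \Diff call on $(A, B, [0, m{-}1])$, and (iii) for every reported position $p$ with $A[p] = 0$, one \Set of $A[p]$ to $1$ and another of $B[(p+x) \bmod m]$ to $1$ to preserve the invariant. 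Processing of $x$ stops either when $c_x$ rounds have been performed or when \Diff returns no positions (\emph{saturation}, i.e.\ $S + x \subseteq S$).

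For correctness I argue by induction on the processing order. Given any target sum $\sigma = \sum_j a_{i_j} x_{i_j}$ with $0 \le a_{i_j} \le c_{x_{i_j}}$, I need to show that the partial sum $\tau_j = \sum_{k \le j} a_{i_k} x_{i_k}$ is in $S$ once $x_{i_j}$ has been processed. If processing of $x_{i_j}$ runs its full $c_{x_{i_j}} \ge a_{i_j}$ rounds, iterating $+x_{i_j}$ on $\tau_{j-1} \in S$ yields $\tau_j$ after $a_{i_j}$ rounds; if it terminates early by saturation, $S$ is closed under $+x_{i_j}$ and therefore contains the whole $+x_{i_j}$-orbit of $\tau_{j-1}$, including $\tau_j$. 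Consequently no re-processing of earlier inputs is ever required, and $S = S^*$ at termination.

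The main obstacle is bounding the output of \Diff, which enumerates \emph{both} directions of the symmetric difference rather than just the new elements. The key identity is $|(S+x) \setminus S| = |S \setminus (S+x)|$, immediate from the fact that cyclic shifting is a bijection on $\mathbb{Z}_m$: a round that discovers $k$ new elements reports exactly $2k$ differences. Classifying each round as \emph{productive} (adds $\ge 1$ new element to $S$) or as the single terminating round of some distinct input, the total number of rounds is at most $|S^*| + \#\{\text{distinct inputs}\} = O(m)$, and the total number of positions reported by \Diff across all rounds is also $O(m)$. Since every \Shift, \Diff, or \Set in the deterministic shift-tree costs $O(\log m \cdot \alpha(m))$ per invocation and per reported difference, the overall running time is $O(m \log m \cdot \alpha(m))$, while the two shift-trees use $O(m)$ space, matching the claim.
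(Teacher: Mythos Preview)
Your analysis has a genuine gap in the running-time bound, specifically in the cost you assign to $\Shift$. You write that ``every \Shift, \Diff, or \Set in the deterministic shift-tree costs $\bigO(\log m \cdot \alpha(m))$ per invocation,'' but this is not what the data structure provides: the paper's shift-tree supports $\Shift(k)$ only in time $\bigO(m/2^j \cdot \alpha(m))$, where $j$ is the largest integer with $2^j \mid k$. In particular, a shift by an odd amount costs $\Theta(m \cdot \alpha(m))$, not $\bigO(\log m \cdot \alpha(m))$. If the distinct inputs are processed in an arbitrary order (as you propose), the re-alignment shifts between consecutive elements can all be odd, and with up to $\Theta(m)$ distinct inputs the total shift cost blows up to $\Theta(m^2 \cdot \alpha(m))$. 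The paper makes exactly this point (``if elements are processed in arbitrary order, the total complexity of shift operations can be $\bigO(m^2)$'') and resolves it by traversing \emph{all} cyclic shifts in bit-reversal order, so that the successive shift amounts $\sigma(i)-\sigma(i-1)$ amortize to $\bigO(m \log m)$ overall (Lemma~\ref{lem:shift-amortization}). Your argument is missing this ordering idea entirely; without it, the claimed $\bigO(m \log m \cdot \alpha(m))$ bound does not follow.

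A secondary issue is that you take the shift-trees to have length exactly $m$. The data structure requires power-of-two length, and more importantly a cyclic shift modulo~$m$ is not the same as a cyclic shift of a padded string of length $2^{\lceil \log_2 m \rceil}$. The paper handles this by padding to the smallest power of two $L \ge 2m$ and storing $s'' = s\,0^{L-2m}\,s$ in the second tree, so that $s^{+x}$ appears as a prefix of $(s'')^{+x}$ for every $x \in \{0,\dots,m-1\}$. This is easily fixed once you notice it, but it is not addressed in your proposal.
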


The first variant is Monte-Carlo randomized and matches the runtime of Las-Vegas algorithm by Cardinal and Iacono \cite{Cardinal2021}.
The second variant is fully deterministic and improves upon the result of Axiotis et al.\ \cite{Axiotis2021}.
Our algorithms are offline as they process the input elements in specific order to achieve their running times.

Although we provide a detailed analysis only for Monte-Carlo randomized and deterministic shift-trees,
it is also possible to obtain a Las-Vegas implementation of the data structure.
Such an implementation automatically leads to a Las-Vegas algorithm for Modular Subset Sum that truly matches the runtime obtained in \cite{Cardinal2021}.
We outline this approach in Remark \ref{remark:las-vegas}.

\paragraph*{Sketch of the shift-tree data structure}

We now explain the high-level idea behind our data structure.
The shift-tree is a perfect binary tree built upon some string $s$.
The leaves of the tree store the consecutive letters of string $s$.
Since the tree is perfect, the length of string $s$ is required to be a power of two.
The inner nodes correspond to substrings of $s$ formed from underlying leaves and store their hashes.
The hashes can be updated in logarithmic time after changing a single character of $s$.

Consider two shift-trees $T_1$ and $T_2$ built for strings $s_1$ and $s_2$ respectively, such that $|s_1| = |s_2| = m$.
We can find all positions where $s_1$ and $s_2$ differ by descending from the roots of both trees simultaneously.
We compare hashes in the roots and proceed recursively with the children if the hashes differ.
Assuming there is no hash collision, we end up in leaves corresponding to positions where $s_1$ and $s_2$ differ.
Such a procedure will take $\bigO(k \log m)$ time, where $k$ is the number of differences.

The tricky operation is the cyclic shift of the maintained string.
A naive approach would be to simply rebuild the whole tree in linear time.
We improve this by noticing that some parts of the tree can be reused.
Assume that the string has length $m$ and we want to shift the string by $2^j$.
Such operation is equivalent to shifting subtrees of size $2^j$ by $1$, what can be done by changing links to children on the appropriate level of the shift-tree.
After such modification, hashes on higher levels still need to be updated, but not the hashes in the moved subtrees.
This yields a total time of $\bigO(m / 2^j)$ for a cyclic shift by $2^j$, and it can be easily extended to shifts of form $k2^j$.
Even though it seems like a subtle improvement, it is enough to obtain a fast algorithm for Modular Subset Sum.

To make the shift-trees deterministic, we replace hashes with \emph{tags}.
Tags are identifiers associated with strings, but unlike hashes, they are not unique: one string can be represented by multiple tags.
Each time a node is updated it receives a new tag.
We propagate the information about tags that represent the same strings lazily while searching for differences.
More specifically, if the tags are not known to be equal, the search procedure always recurs.
If the recursion was unnecessary, we know about it upon return and we can memorize that the respective tags were equivalent.

\paragraph*{Sketch of the algorithm for Modular Subset Sum}

Our algorithm follows the ideas of \cite{Axiotis2021,Cardinal2021}.
We simulate Bellman's algorithm faster.
We iteratively compute the sets of new attainable subset sums $C_i$ after adding the $i$-th element.
More precisely, $C_i = S_i \setminus S_{i-1} = \left(S_{i-1} + x_i\right) \setminus S_{i-1}$.
The key idea is to notice that instead of computing $C_i$, we can compute the symmetric difference $D_i = \left(S_{i-1} + x_i\right) \triangle S_{i-1}$,
and then reduce it to $C_i$, because $|D_i| = 2|C_i|$.

Let $s_i \in \{0, 1\}^m$ be the characteristic vector of the set $S_i$, i.e. $s_i[j] = 1$ iff $j \in S_i$.
The problem of finding the set $D_{i+1}$ is then reduced to the problem of finding differences between the string $s_i$ and its cyclic shift.
We apply shift-trees to solve this problem efficiently.
The shift-tree requires the length of the string to be a power of two, so we assume that $m = 2^k$ for now
(we show how to get rid of this assumption in section \ref{sec:modular-subset-sum}).
Consider two shift-trees $T_1$ and $T_2$ built for string $s_i$ and its cyclic shift respectively.
We simulate the Bellman's algorithm step as follows:
\begin{romanenumerate}
    \item adjust the cyclic shift of $T_2$;
    \item find the set $D_i$ by comparing $T_1$ and $T_2$;
    \item update the trees with new attainable subset sums.
\end{romanenumerate}
The bottleneck of the algorithm are the adjustments of cyclic shift of $T_2$:
if elements are processed in arbitrary order, the total complexity of shift operations can be $\bigO(m^2)$.
In section \ref{sec:traversing-cyclic-shifts}, we show that if elements are processed in a bit-reversal order,
then the shift operations amortize to $\bigO(m \log m)$.

\subsection{Preliminaries}

We introduce the following notation for strings.
We use the same notation for other sequences.

\begin{definition}
    Given a string $s = c_0...c_{n-1}$, we refer to $c_i$ as $s[i]$ and to substring $c_i...c_j$ as $s[i:j]$.
\end{definition}

\begin{definition}
    Given a string $s$ and $k \in \mathbb{Z}$, we denote by $s^{+k}$ and $s^{-k}$ the \emph{cyclic shift} of $s$ by $k$ positions to the right and left respectively.
    In other words, for every $i \in \{0, ..., |s|-1\}$:
    \begin{gather*}
        s[i] = s^{+k}[(i+k) \bmod |s|] = s^{-k}[(i-k) \bmod |s|]
    \end{gather*}
\end{definition}

\section{Shift-trees}
\label{sec:shift-tree}

\subsection{Overview}

We introduce \emph{shift-trees}, a variant of the segment tree data structure.
A \emph{shift-tree} $T$ maintains a string $s$ of length $m = 2^n$ over an alphabet $\Sigma$ and supports the following operations:
\begin{itemize}
    \item $T.\Init(s)$: Initialize the data structure with string $s$.
    \item $T.\Set(i, x)$: Given an index $i \in \{0, ..., |s|-1\}$ and a letter $x \in \Sigma$, change $s[i]$ to $x$.
    \item $T.\Shift(k)$: Given an offset $k \in \mathbb{Z}$, replace $s$ with $s^{+k}$, i.e. cyclically shift the string $s$ by $k$ positions to the right.
    \item
        $T.\Diff(Q, a, b)$: Given another shift-tree $Q$ representing a string $q$ such that $|s| = |q|$, list all differences between $s[a:b]$ and $q[a:b]$,
        i.e. return the list $L$ of all integers $x$ such that $a \leq x \leq b$ and $s[x] \neq q[x]$.
\end{itemize}
In this section, we describe a hashing-based version of the data structure,
which uses $\bigO(m)$ memory and supports these operations in the following time complexities:
\begin{itemize}
    \item $\Init$: $\bigO(m)$;
    \item $\Set$: $\bigO(\log m)$;
    \item $\Shift(k)$: $\bigO(m / 2^j)$, where $j$ is the largest integer such that $2^j \mid k$;
    \item
        $\Diff$: $\bigO((d+1) \log m)$,
        where $d = |L|$ is the number of differences.%
        \footnote{By writing $d+1$ in the complexity, we mean that the runtime of $\Diff$ operation is $\bigO(\log m)$ if $d = 0$.}
\end{itemize}
In section \ref{sec:deterministic-shift-tree}, we present a variant that is fully deterministic,
but achieves a slightly worse time complexity (by $\alpha(m)$, where $\alpha$ is the inverse Ackermann function).

We require an integer alphabet $\Sigma$ of size $\bigO(\text{poly}(m))$.
We use a standard Rabin-Karp rolling hash function \cite{Karp1987}:
we choose a sufficiently big prime $p$ and an integer $r \in \mathbb{Z}_p$, where $r$ is chosen uniformly at random.
The hash of a string $s$ is defined as $h(s) = \sum_{i=0}^{|s|-1} s[i] \cdot r^i \bmod p$.
We assume that $\Sigma \subseteq \mathbb{Z}_p$, so $h(x) = x$ for $x \in \Sigma$.
If hashes of two strings of the same length are equal, then the strings are equal with high probability.
Moreover, given hashes of some strings $s_1$ and $s_2$, one can compute hash of their concatenation using the following identity:
$h(s_1s_2) = h_1 + h_2 \cdot r^{|s_1|}$.
To enable constant-time computation of this formula, we precompute powers of $r$ up to $r^m$.
We use these properties extensively in our data structure.

\subsection{Structure}

Let $s$ be the string maintained by the data structure and let $|s| = 2^n$.
The shift-tree is a perfect binary tree built upon the string $s$.
The leaves of the tree store the consecutive letters of string $s$, with the leftmost one corresponding to $s[0]$ and the rightmost one corresponding to $s[|s|-1]$.
The inner nodes correspond to substrings of $s$ formed from underlying leaves and store hashes to enable their fast comparison.
By $\Level(v)$ we denote the distance from the node $v$ to the root node.
The root node has level $0$ and the leaves have level $n$.
There are $2^k$ nodes on the $k$-th level.

We now provide a compact $\bigO(m)$ memory representation of the data structure that enables us to achieve the desired complexity of $\Shift$ operation.
The only data stored in memory is an array of hashes $H[1 : 2^{n+1}-1]$ and a single integer $\Delta \in \{0, ..., 2^n-1\}$.
The values in $H[2^n : 2^{n+1}-1]$ correspond to the leaves and are letters of the represented string.
The value of $\Delta$ defines a cyclic shift of leaf indices, i.e. the $j$-th leftmost leaf of the tree has index $(j-\Delta) \bmod 2^n + 2^n$.

The tree structure is defined implicitly based on the value of $\Delta$ as follows.
The nodes of the tree are numbered from $1$ to $2^{n+1}-1$.
The nodes on the $k$-th level are numbered from $2^k$ to $2^{k+1}-1$.
In particular, the root node has index $1$ and the leaves have indices from $2^n$ to $2^{n+1}-1$.
We first introduce the following auxiliary function:
\begin{gather*}
    \Skew(k) = \floor{\frac{\Delta}{2^{n-k}}} \bmod 2
\end{gather*}
Note that floor division by a power of two is equivalent to right bitwise shift, so value of $\Skew(k)$ is simply $(n-k)$-th least significant bit of $\Delta$.
Let $i$ be an inner node and let $k = \Level(i)$. We define the children of node $i$ as follows:
\begin{align*}
    \Left(i) &= (2i - \Skew(k+1)) \bmod 2^{k+1} + 2^{k+1} \\
    \Right(i) &= (2i + 1 - \Skew(k+1)) \bmod 2^{k+1} + 2^{k+1}
\end{align*}
We also define the parent of node $i \neq 1$ at level $k$.
\begin{gather*}
    \Parent(i) = \floor{\frac{(i + \Skew(k)) \bmod 2^k + 2^k}{2}}
\end{gather*}
The \emph{left}, \emph{right} and \emph{parent} functions can be implemented in constant time.
The following lemma and corollary summarize the properties of a tree structure defined as above.

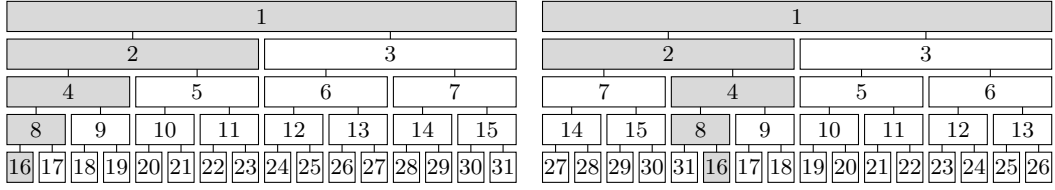
\begin{figure}[t]
    \begin{subfigure}[b]{0.495\textwidth}
        \centering
        \begin{tikzpicture}
    \draw[fill=gray!30] (1,0) rectangle node{\footnotesize 1} (7.7,0.4);
    \draw (2.655,-0.1) -- (2.655,0);
    \draw[fill=gray!30] (1,-0.5) rectangle node{\footnotesize 2} (4.31,-0.1);
    \draw (1.8075,-0.6) -- (1.8075,-0.5);
    \draw[fill=gray!30] (1,-1) rectangle node{\footnotesize 4} (2.615,-0.6);
    \draw (1.38375,-1.1) -- (1.38375,-1);
    \draw[fill=gray!30] (1,-1.5) rectangle node{\footnotesize 8} (1.7675,-1.1);
    \draw (1.171875,-1.6) -- (1.171875,-1.5);
    \draw[fill=gray!30] (1,-2) rectangle node{\footnotesize 16} (1.34375,-1.6);
    \draw (1.595625,-1.6) -- (1.595625,-1.5);
    \draw (1.42375,-2) rectangle node{\footnotesize 17} (1.7675,-1.6);
    \draw (2.23125,-1.1) -- (2.23125,-1);
    \draw (1.8475,-1.5) rectangle node{\footnotesize 9} (2.615,-1.1);
    \draw (2.019375,-1.6) -- (2.019375,-1.5);
    \draw (1.8475,-2) rectangle node{\footnotesize 18} (2.19125,-1.6);
    \draw (2.443125,-1.6) -- (2.443125,-1.5);
    \draw (2.27125,-2) rectangle node{\footnotesize 19} (2.615,-1.6);
    \draw (3.5025,-0.6) -- (3.5025,-0.5);
    \draw (2.695,-1) rectangle node{\footnotesize 5} (4.31,-0.6);
    \draw (3.07875,-1.1) -- (3.07875,-1);
    \draw (2.695,-1.5) rectangle node{\footnotesize 10} (3.4625,-1.1);
    \draw (2.866875,-1.6) -- (2.866875,-1.5);
    \draw (2.695,-2) rectangle node{\footnotesize 20} (3.03875,-1.6);
    \draw (3.290625,-1.6) -- (3.290625,-1.5);
    \draw (3.11875,-2) rectangle node{\footnotesize 21} (3.4625,-1.6);
    \draw (3.92625,-1.1) -- (3.92625,-1);
    \draw (3.5425,-1.5) rectangle node{\footnotesize 11} (4.31,-1.1);
    \draw (3.714375,-1.6) -- (3.714375,-1.5);
    \draw (3.5425,-2) rectangle node{\footnotesize 22} (3.88625,-1.6);
    \draw (4.138125,-1.6) -- (4.138125,-1.5);
    \draw (3.96625,-2) rectangle node{\footnotesize 23} (4.31,-1.6);
    \draw (6.045,-0.1) -- (6.045,0);
    \draw (4.39,-0.5) rectangle node{\footnotesize 3} (7.7,-0.1);
    \draw (5.1975,-0.6) -- (5.1975,-0.5);
    \draw (4.39,-1) rectangle node{\footnotesize 6} (6.005,-0.6);
    \draw (4.77375,-1.1) -- (4.77375,-1);
    \draw (4.39,-1.5) rectangle node{\footnotesize 12} (5.1575,-1.1);
    \draw (4.561875,-1.6) -- (4.561875,-1.5);
    \draw (4.39,-2) rectangle node{\footnotesize 24} (4.73375,-1.6);
    \draw (4.985625,-1.6) -- (4.985625,-1.5);
    \draw (4.81375,-2) rectangle node{\footnotesize 25} (5.1575,-1.6);
    \draw (5.62125,-1.1) -- (5.62125,-1);
    \draw (5.2375,-1.5) rectangle node{\footnotesize 13} (6.005,-1.1);
    \draw (5.409375,-1.6) -- (5.409375,-1.5);
    \draw (5.2375,-2) rectangle node{\footnotesize 26} (5.58125,-1.6);
    \draw (5.833125,-1.6) -- (5.833125,-1.5);
    \draw (5.66125,-2) rectangle node{\footnotesize 27} (6.005,-1.6);
    \draw (6.8925,-0.6) -- (6.8925,-0.5);
    \draw (6.085,-1) rectangle node{\footnotesize 7} (7.7,-0.6);
    \draw (6.46875,-1.1) -- (6.46875,-1);
    \draw (6.085,-1.5) rectangle node{\footnotesize 14} (6.8525,-1.1);
    \draw (6.256875,-1.6) -- (6.256875,-1.5);
    \draw (6.085,-2) rectangle node{\footnotesize 28} (6.42875,-1.6);
    \draw (6.680625,-1.6) -- (6.680625,-1.5);
    \draw (6.50875,-2) rectangle node{\footnotesize 29} (6.8525,-1.6);
    \draw (7.31625,-1.1) -- (7.31625,-1);
    \draw (6.9325,-1.5) rectangle node{\footnotesize 15} (7.7,-1.1);
    \draw (7.104375,-1.6) -- (7.104375,-1.5);
    \draw (6.9325,-2) rectangle node{\footnotesize 30} (7.27625,-1.6);
    \draw (7.528125,-1.6) -- (7.528125,-1.5);
    \draw (7.35625,-2) rectangle node{\footnotesize 31} (7.7,-1.6);
\end{tikzpicture}
        \caption{$\Delta = 0000_2 = 0$}
    \end{subfigure}
    \begin{subfigure}[b]{0.495\textwidth}
        \centering
        \begin{tikzpicture}
    \draw[fill=gray!30] (1,0) rectangle node{\footnotesize 1} (7.7,0.4);
    \draw (2.655,-0.1) -- (2.655,0);
    \draw[fill=gray!30] (1,-0.5) rectangle node{\footnotesize 2} (4.31,-0.1);
    \draw (1.8075,-0.6) -- (1.8075,-0.5);
    \draw (1,-1) rectangle node{\footnotesize 7} (2.615,-0.6);
    \draw (1.38375,-1.1) -- (1.38375,-1);
    \draw (1,-1.5) rectangle node{\footnotesize 14} (1.7675,-1.1);
    \draw (1.171875,-1.6) -- (1.171875,-1.5);
    \draw (1,-2) rectangle node{\footnotesize 27} (1.34375,-1.6);
    \draw (1.595625,-1.6) -- (1.595625,-1.5);
    \draw (1.42375,-2) rectangle node{\footnotesize 28} (1.7675,-1.6);
    \draw (2.23125,-1.1) -- (2.23125,-1);
    \draw (1.8475,-1.5) rectangle node{\footnotesize 15} (2.615,-1.1);
    \draw (2.019375,-1.6) -- (2.019375,-1.5);
    \draw (1.8475,-2) rectangle node{\footnotesize 29} (2.19125,-1.6);
    \draw (2.443125,-1.6) -- (2.443125,-1.5);
    \draw (2.27125,-2) rectangle node{\footnotesize 30} (2.615,-1.6);
    \draw (3.5025,-0.6) -- (3.5025,-0.5);
    \draw[fill=gray!30] (2.695,-1) rectangle node{\footnotesize 4} (4.31,-0.6);
    \draw (3.07875,-1.1) -- (3.07875,-1);
    \draw[fill=gray!30] (2.695,-1.5) rectangle node{\footnotesize 8} (3.4625,-1.1);
    \draw (2.866875,-1.6) -- (2.866875,-1.5);
    \draw (2.695,-2) rectangle node{\footnotesize 31} (3.03875,-1.6);
    \draw (3.290625,-1.6) -- (3.290625,-1.5);
    \draw[fill=gray!30] (3.11875,-2) rectangle node{\footnotesize 16} (3.4625,-1.6);
    \draw (3.92625,-1.1) -- (3.92625,-1);
    \draw (3.5425,-1.5) rectangle node{\footnotesize 9} (4.31,-1.1);
    \draw (3.714375,-1.6) -- (3.714375,-1.5);
    \draw (3.5425,-2) rectangle node{\footnotesize 17} (3.88625,-1.6);
    \draw (4.138125,-1.6) -- (4.138125,-1.5);
    \draw (3.96625,-2) rectangle node{\footnotesize 18} (4.31,-1.6);
    \draw (6.045,-0.1) -- (6.045,0);
    \draw (4.39,-0.5) rectangle node{\footnotesize 3} (7.7,-0.1);
    \draw (5.1975,-0.6) -- (5.1975,-0.5);
    \draw (4.39,-1) rectangle node{\footnotesize 5} (6.005,-0.6);
    \draw (4.77375,-1.1) -- (4.77375,-1);
    \draw (4.39,-1.5) rectangle node{\footnotesize 10} (5.1575,-1.1);
    \draw (4.561875,-1.6) -- (4.561875,-1.5);
    \draw (4.39,-2) rectangle node{\footnotesize 19} (4.73375,-1.6);
    \draw (4.985625,-1.6) -- (4.985625,-1.5);
    \draw (4.81375,-2) rectangle node{\footnotesize 20} (5.1575,-1.6);
    \draw (5.62125,-1.1) -- (5.62125,-1);
    \draw (5.2375,-1.5) rectangle node{\footnotesize 11} (6.005,-1.1);
    \draw (5.409375,-1.6) -- (5.409375,-1.5);
    \draw (5.2375,-2) rectangle node{\footnotesize 21} (5.58125,-1.6);
    \draw (5.833125,-1.6) -- (5.833125,-1.5);
    \draw (5.66125,-2) rectangle node{\footnotesize 22} (6.005,-1.6);
    \draw (6.8925,-0.6) -- (6.8925,-0.5);
    \draw (6.085,-1) rectangle node{\footnotesize 6} (7.7,-0.6);
    \draw (6.46875,-1.1) -- (6.46875,-1);
    \draw (6.085,-1.5) rectangle node{\footnotesize 12} (6.8525,-1.1);
    \draw (6.256875,-1.6) -- (6.256875,-1.5);
    \draw (6.085,-2) rectangle node{\footnotesize 23} (6.42875,-1.6);
    \draw (6.680625,-1.6) -- (6.680625,-1.5);
    \draw (6.50875,-2) rectangle node{\footnotesize 24} (6.8525,-1.6);
    \draw (7.31625,-1.1) -- (7.31625,-1);
    \draw (6.9325,-1.5) rectangle node{\footnotesize 13} (7.7,-1.1);
    \draw (7.104375,-1.6) -- (7.104375,-1.5);
    \draw (6.9325,-2) rectangle node{\footnotesize 25} (7.27625,-1.6);
    \draw (7.528125,-1.6) -- (7.528125,-1.5);
    \draw (7.35625,-2) rectangle node{\footnotesize 26} (7.7,-1.6);
\end{tikzpicture}
        \caption{$\Delta = 0101_2 = 5$}
    \end{subfigure}
    \caption{Shift-tree node numbering for different values of $\Delta$.}
\end{figure}

\begin{lemma}
    \label{lem:tree-struct}
    The functions left, right and parent define a perfect binary tree, such that the indices of nodes on the $k$-th level,
    when ordered from left to right, form a sequence $(2^k, ..., 2^{k+1}-1)^{+\floor{\Delta / 2^{n-k}}}$.
\end{lemma}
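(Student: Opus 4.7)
My plan is to prove the lemma by induction on the level $k$, where the inductive hypothesis asserts that the nodes on level $k$, ordered left-to-right, are exactly the cyclic shift $(2^k, \ldots, 2^{k+1}-1)^{+d_k}$ with $d_k := \floor{\Delta/2^{n-k}}$. The base case is immediate: level $0$ has the single node $1$, and $d_0 = 0$ since $0 \le \Delta < 2^n$. The inductive step, together with a short verification that $\Parent$ inverts the children maps, will give the full structural claim.

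The key arithmetic observation is that $d_{k+1} = 2d_k + \Skew(k+1)$, because $\Skew(k+1)$ is by definition the $(n-k-1)$-th least significant bit of $\Delta$. Assume inductively that the node in position $j$ (for $0 \le j < 2^k$) at level $k$ has index $((j - d_k) \bmod 2^k) + 2^k$. Equivalently, a node $i$ at level $k$ sits in position $j = (i - 2^k + d_k) \bmod 2^k$. The children of this node should occupy positions $2j$ and $2j+1$ at level $k+1$, i.e.\ have indices $((2j - d_{k+1}) \bmod 2^{k+1}) + 2^{k+1}$ and $((2j + 1 - d_{k+1}) \bmod 2^{k+1}) + 2^{k+1}$. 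A direct substitution using $d_{k+1} = 2d_k + \Skew(k+1)$ and the fact that $2j - 2d_k \equiv 2i \pmod{2^{k+1}}$ (which one verifies by unfolding the $\bmod 2^k$ in the definition of $j$ and multiplying by $2$) shows these expressions coincide with $\Left(i)$ and $\Right(i)$ as defined. Consequently, the level-$(k+1)$ indices in left-to-right order form the sequence $(2^{k+1}, \ldots, 2^{k+2}-1)^{+d_{k+1}}$, closing the induction.

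To finish, I would check that the formula for $\Parent$ is literally the inverse: given $i$ at level $k$ in position $j = (i - 2^k + d_k) \bmod 2^k$, its parent should occupy position $\floor{j/2}$ at level $k-1$, so it should equal $((\floor{j/2} - d_{k-1}) \bmod 2^{k-1}) + 2^{k-1}$. Reducing $(i + \Skew(k)) \bmod 2^k$ and then halving reproduces exactly this expression; the verification is routine once one notes $d_k = 2d_{k-1} + \Skew(k)$ and splits the two parity cases for $i$. From this all tree-structural properties follow: every level-$(k+1)$ node has a unique parent at level $k$, each inner node has exactly two distinct children, and the induced graph is therefore a perfect binary tree whose in-order traversal at each level realizes the advertised cyclic sequence.

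The main obstacle I anticipate is purely bookkeeping: keeping the two ``coordinate systems'' (the raw index $i$ and the in-order position $j$) straight through the $\bmod$ operations, and in particular justifying the step $2j - 2d_k \equiv 2i \pmod{2^{k+1}}$ despite the fact that doubling does not in general commute with reduction modulo $2^k$. I would handle this by writing $j = i - 2^k + d_k - q \cdot 2^k$ explicitly for the unique $q \in \{0,1\}$ making $0 \le j < 2^k$, so that $2j = 2i - 2^{k+1} - q \cdot 2^{k+1} + 2d_k$, whence the congruence modulo $2^{k+1}$ is transparent. Once this single identity is dispatched, the rest of the proof is a mechanical unfolding of the definitions.
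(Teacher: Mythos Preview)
Your proposal is correct and follows essentially the same approach as the paper: induction on the level, using the identity $d_{k+1} = 2d_k + \Skew(k+1)$ to pass from the level-$k$ sequence to the level-$(k{+}1)$ sequence. The only cosmetic difference is that the paper manipulates the whole cyclically shifted sequence at once rather than tracking an individual position $j$, and it verifies the $\Parent$ formula by directly computing $\Parent(\Left(2^k+i))$ and $\Parent(\Right(2^k+i))$, which is a bit quicker than your position-based inversion but equivalent in content.
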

\begin{proof}
    We prove the lemma by induction on the level.
    The condition is satisfied for the $0$-th level, which contains only the root node with index $1$.
    Assume now that the condition is satisfied for the $k$-th level, i.e.
    the nodes on the $k$-th level form a sequence $(2^k, ..., 2^{k+1}-1)^{+\delta}$,
    where $\delta = \floor{\Delta / 2^{n-k}}$.
    By substituting indices of children in this sequence, we obtain the following sequence for the $(k+1)$-st level:
    \begin{gather*}
        (\Left(2^k), \Right(2^k), ..., \Left(2^{k+1}-1), \Right(2^{k+1}-1))^{+2\delta}
    \end{gather*}
    The shift is now $2\delta$, because each element has been replaced by two elements.
    We want to prove that this is exactly the sequence $(2^{k+1}, ..., 2^{k+2}-1)^{+\floor{\Delta / 2^{n-k-1}}}$.
    The shift $\floor{\Delta / 2^{n-k-1}}$ can be rewritten as $2\delta + x$, where $x = \Skew(k+1) = \floor{\Delta / 2^{n-k-1}} \bmod 2$.
    We now simplify the equation:
    \begin{align*}
        (\Left(2^k), \Right(2^k), ..., \Left(2^{k+1}-1), \Right(2^{k+1}-1))^{+2\delta} &\stackrel{?}{=} (2^{k+1}, ..., 2^{k+2}-1)^{+2\delta + x} \\
        (\Left(2^k), \Right(2^k), ..., \Left(2^{k+1}-1), \Right(2^{k+1}-1)) &\stackrel{?}{=} (2^{k+1}, ..., 2^{k+2}-1)^{+x} \\
        (\Left(2^k) - 2^{k+1}, \Right(2^k) - 2^{k+1}, ..., \Right(2^{k+1}-1) - 2^{k+1}) &\stackrel{?}{=} (0, ..., 2^{k+1}-1)^{+x}
    \end{align*}
    After substituting the values of \emph{left} and \emph{right}, and simplifying, we obtain the following:
    \begin{gather*}
        ((0-x) \bmod 2^{k+1}, (1-x) \bmod 2^{k+1}, ..., (2^{k+1}-1-x) \bmod 2^{k+1}) \stackrel{?}{=} (0, ..., 2^{k+1}-1)^{+x}
    \end{gather*}
    The obtained equation trivially satisfies the definition of cyclic shift by $x$, and all transformations were equivalent.
    This completes the induction.

    We complete the proof by showing that \emph{parent} function is well-defined.
    Consider an inner node on the $k$-th level with index $2^k+i$.
    It is enough to show that it is parent of its children.
    After substituting and simplifying the formulas, we get the desired result:
    \begin{align*}
        \Parent(\Left(2^k + i)) &= \floor{\frac{2i + 2^{k+1}}{2}} = 2^k + i \\
        \Parent(\Right(2^k + i)) &= \floor{\frac{2i + 1 + 2^{k+1}}{2}} = 2^k + i
    \end{align*}
\end{proof}

\begin{corollary}
    \label{cor:tree-struct}
    The functions left, right and parent define a perfect binary tree such that:
    \begin{alphaenumerate}
        \item\label{cor:tree-struct:a} nodes on the $k$-th level have indices from $2^k$ to $2^{k+1}-1$, for each valid $k$;
        \item\label{cor:tree-struct:b} the indices of leaves, when ordered from left to right, form a sequence $(2^n, ..., 2^{n+1}-1)^{+\Delta}$;
        \item\label{cor:tree-struct:c} the structure of subtrees rooted at the $k$-th level depends only on $\Delta \bmod 2^{n-k}$.
    \end{alphaenumerate}
\end{corollary}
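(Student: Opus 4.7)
The plan is to derive each of the three parts as a direct consequence of Lemma \ref{lem:tree-struct} combined with the definition $\Skew(k) = \floor{\Delta / 2^{n-k}} \bmod 2$. No substantial new combinatorics is needed; the proof is essentially a matter of reading off the appropriate specialization of the lemma.

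First, for part (a), I would note that a cyclic shift merely permutes the entries of a sequence. Hence, by the lemma, the set of indices appearing at level $k$ coincides with $\{2^k, \ldots, 2^{k+1}-1\}$, irrespective of the shift amount $\floor{\Delta / 2^{n-k}}$.

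Next, for part (b), I would specialize the lemma to $k = n$: since $2^{n-n} = 1$, we have $\floor{\Delta / 2^{n-n}} = \Delta$, so the leaves from left to right form exactly the sequence $(2^n, \ldots, 2^{n+1}-1)^{+\Delta}$, as claimed.

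Part (c) is the step that requires a little more care and will be the main point of the proof. The internal shape of any subtree rooted at level $k$ is obtained by iteratively applying the $\Left$ and $\Right$ functions at levels $k+1, k+2, \ldots, n$, and these functions depend solely on the values $\Skew(k+1), \Skew(k+2), \ldots, \Skew(n)$. Since $\Skew(j)$ is the $(n-j)$-th bit of $\Delta$, these skew values together encode precisely the $n-k$ least significant bits of $\Delta$, namely $\Delta \bmod 2^{n-k}$. Consequently, any two values of $\Delta$ that agree modulo $2^{n-k}$ induce identical child/parent relationships within every subtree at level $k$ (although the absolute indices of the subtree roots themselves, which depend on the higher-order bits of $\Delta$, may differ). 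The subtlety is merely to be explicit about what "structure" means here — the pattern of $\Left$/$\Right$ links inside each subtree, rather than the particular index of its root.
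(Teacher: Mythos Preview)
Your proof is correct and follows essentially the same approach as the paper: parts (a) and (b) are immediate specializations of Lemma~\ref{lem:tree-struct}, and part (c) is argued by observing that the links below level $k$ depend only on $\Skew(k+1),\ldots,\Skew(n)$, which are exactly the $n-k$ least significant bits of $\Delta$. The only difference is that you spell out the specializations and the meaning of ``structure'' in slightly more detail than the paper does.
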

\begin{proof}
    The first two properties follow instantly from the lemma \ref{lem:tree-struct}.
    For the property (\ref{cor:tree-struct:c}), notice that the links on these levels depend only on the values $\Skew(k+1), ..., \Skew(n)$.
    These values are exactly the $n-k$ least significant bits of $\Delta$.
\end{proof}

\subsection{Invariant}

Let $s$ be the string maintained by the data structure and let $|s| = 2^n$.
We define the string associated with a node $i$ recursively as follows:
\begin{gather*}
    \Str(i) =
        \begin{cases}
            s^{-\Delta}[i - 2^n] & \text{for } i \geq 2^n \text{ (i.e. $i$ is leaf node)} \\
            \Str(\Left(i)) \Str(\Right(i)) & \text{for } i < 2^n \text{ (i.e. $i$ is inner node)}
        \end{cases}
\end{gather*}
By corollary \ref{cor:tree-struct}\ref{cor:tree-struct:b}, the $k$-th letter of string $s$ is associated with the $k$-th leftmost leaf.
It follows that $s = \Str(1)$, i.e. string associated with the root node is $s$.
We maintain the following invariant:

\begin{invariant}
    For each node $i$ the following holds: $H[i] = h(\Str(i))$.
\end{invariant}

The invariant ensures that each node stores a hash of its associated string.
We use this property to implement the $\Diff$ operation in required time complexity.

\subsection{Operations}

Let $s$ be the string maintained by the data structure and let $|s| = m = 2^n$.
We first define an $\Update(i)$ primitive that is used by all operations that modify the data structure.
The $\Update$ procedure simply recalculates the hash of an inner node $i$ based on hashes of its children.
This can be done in constant time using basic modular arithmetic, if appropriate powers of $r$ are precomputed.

\begin{algorithm}[H]
    \caption{The $\Update$ procedure}
    \begin{algorithmic}[1]
        \Function{Update}{$i$}
            \State $H[i] \gets (H[\Left(i)] + H[\Right(i)] \cdot r^{|\Str(\Left(i))|}) \bmod p$
        \EndFunction
    \end{algorithmic}
\end{algorithm}

\subparagraph*{Init.}

We initialize $\Delta$ with $0$ and leaves with the letters of the input string $s$.
Specifically, we set $H[2^n + i] = s[i]$ for each $i \in \{0, ..., 2^n-1\}$, because letter $s[i]$ is associated with the node $2^n + i$.
We then compute all the hashes by calling an $\Update$ on the remaining nodes, beginning at the bottom of the tree.
Overall, the $\Init$ operation updates $\bigO(m)$ nodes and runs in $\bigO(m)$ time.

\subparagraph*{Set.}

Assume that we change $s[i]$ to $x$.
Let $j = (i - \Delta) \bmod 2^n + 2^n$.
Notice that, $\Str(j) = s^{-\Delta}[j - 2^n] = s[(j + \Delta) \bmod 2^n] = s[i]$.
In order to fix the invariant, we set $H[j] = x$ and update hashes of all the ancestors of $j$.
The tree has $\bigO(\log m)$ levels, so the total runtime of $\Set$ operation is $\bigO(\log m)$.

\noindent
\begin{minipage}[b]{0.49\textwidth}
    \begin{algorithm}[H]
        \caption{$\Init$ operation}
        \begin{algorithmic}[1]
            \Function{Init}{$s$}
                \State $\Delta \gets 0$
                \For{$i \gets 0, ..., 2^n-1$}
                    \State $H[2^n + i] \gets s[i]$
                \EndFor
                \For{$i \gets 2^n-1, ..., 1$}
                    \State $\Update(i)$
                \EndFor
            \EndFunction
        \end{algorithmic}
    \end{algorithm}
\end{minipage}
\hfill
\begin{minipage}[b]{0.49\textwidth}
    \begin{algorithm}[H]
        \caption{$\Set$ operation}
        \begin{algorithmic}[1]
            \Function{Set}{$i, x$}
                \State $j \gets (i - \Delta) \bmod 2^n + 2^n$
                \State $H[j] \gets x$
                \While{$j \neq 1$}
                    \State $j \gets \Parent(j)$
                    \State $\Update(j)$
                \EndWhile
            \EndFunction
        \end{algorithmic}
    \end{algorithm}
\end{minipage}

\subparagraph*{Shift.}

Assume that we apply a right cyclic shift by $k$ positions to $s$.
Let $j$ be the largest integer such that $2^j \mid k$.
In order to fix the invariant, we first set $\Delta$ to $(\Delta + k) \bmod 2^n$.
Notice that the invariant is now satisfied for leaves.
Moreover, by corollary \ref{cor:tree-struct}\ref{cor:tree-struct:c} the structure of subtrees rooted at level $n-j$ didn't change,
so the invariant is also satisfied for levels $n-j, ..., n$.
It remains to update hashes on the remaining $n-j$ levels by calling the $\Update$ procedure on their nodes.
Overall, $\bigO(2^{n-j})$ nodes are updated and the $\Shift$ operation runs in $\bigO(m / 2^j)$ time.%
\footnote{
    Provided algorithm requires computation of $j = \max \{ d \in \mathbb{N} : 2^d \mid k \}$.
    In practice, it is sufficient to compute $2^j$ instead of computing $j$ directly.
    This can be done in constant time using the following bit-hack: \lstinline|k & ~(k-1)|.
}

\begin{algorithm}[H]
    \caption{$\Shift$ operation}
    \begin{algorithmic}[1]
        \Function{Shift}{$k$}
            \If{$k \bmod 2^n = 0$}
                \State \Return
            \EndIf
            \State $j \gets \max \{ d \in \mathbb{N} : 2^d \mid k \}$
            \State $\Delta \gets (\Delta + k) \bmod 2^n$
            \For{$i \gets 2^{n-j}-1, ..., 1$}
                \State $\Update(i)$
            \EndFor
        \EndFunction
    \end{algorithmic}
\end{algorithm}

\subparagraph*{Diff.}

Assume we look for differences between the strings maintained by the trees $T$ and $Q$ in interval $[a;b]$.
We provide a recursive procedure $\FindDifferences(T, Q, a, b, i, j, x, y)$
that returns required set of differences between substrings associated with node $i$ of the tree $T$ and node $j$ of the tree $Q$.
The procedure additionally tracks an interval $[x;y]$ that is associated with both nodes, i.e. $T.\Str(i) = T.s[x:y]$ and $Q.\Str(j) = Q.s[x:y]$.

The $\FindDifferences$ procedure works as follows.
If $[a;b] \cap [x;y] = \emptyset$ then procedure returns empty set instantly, because we only look for differences in the interval $[a;b]$.
If hashes of nodes $i$ and $j$ are equal then the substrings are equal w.h.p., so the procedure returns no differences as well.
Otherwise, there is at least one difference between the strings associated with nodes $i$ and $j$.
If the nodes are leaves, then we report the difference.
If the nodes are inner nodes, the procedure is invoked recursively on left and right children.

The $\Diff$ operation simply calls $\FindDifferences$ on roots of the trees $T$ and $Q$.
The procedure will return all the required differences as long as there is no hash collision.

\begin{algorithm}[H]
    \caption{The $\Diff$ operation}
    \begin{algorithmic}[1]
        \Function{T.Diff}{$Q, a, b$}
            \State \Return $\FindDifferences(T, Q, a, b, 1, 1, 0, 2^n-1)$
        \EndFunction
        \State
        \Function{FindDifferences}{$T, Q, a, b, i, j, x, y$}
            \If{$[a;b] \cap [x;y] = \emptyset$ or $T.H[i] = Q.H[j]$}
                \State \Return $\emptyset$
            \EndIf
            \If{$x = y$} \Comment{The nodes $i$ and $j$ are leaves if they represent an unit interval}
                \State \Return $\{x\}$
            \EndIf
            \State $z \gets \frac{x+y+1}{2}$ \Comment{The left nodes represent $[x:z-1]$ and the right nodes represent $[z:y]$}
            \State $A \gets \FindDifferences(T, Q, a, b, T.\Left(i), Q.\Left(j), x, z-1)$
            \State $B \gets \FindDifferences(T, Q, a, b, T.\Right(i), Q.\Right(j), z, y)$
            \State \Return $A \cup B$
        \EndFunction
    \end{algorithmic}
\end{algorithm}

We now argue the complexity of $\Diff$ operation.
Let $d$ be the number of differences that were found.
Let $s_k$ be the number of $\FindDifferences$ calls for which the processed nodes were on the $k$-th level and the procedure recurred.
If the procedure recurred, there existed at least one difference between the strings associated with the nodes.
We can divide such calls into two categories:
\begin{alphaenumerate}
    \item\label{diff-calls-a} there is a difference in $[x;y] \cap [a;b]$ that should be reported;
    \item\label{diff-calls-b} there is a difference in $[x;y] \setminus [a;b]$ that should be ignored and $[x;y] \cap [a;b] \neq \emptyset$.
\end{alphaenumerate}
The number of calls that belong to the category (\ref{diff-calls-a}) is bounded by $d$, i.e. number of reported differences.
There are at most $2$ calls that belong to the category (\ref{diff-calls-b}), because the interval $[x;y]$ must contain $a$ or $b$.
It follows that $s_k \leq d+2$ and $s_0 + ... + s_{n-1} \leq (d+2) \cdot n$.
We can charge the calls that didn't recur to their parents, so the total running time of $\Diff$ operation is $\bigO((d+1) \log m)$.

In order to complete the analysis, we bound the probability that $\Diff$ operation fails to report all differences.
Such situation may occur only if nodes processed by $\FindDifferences$ have different associated strings, but equal hashes.
Let $s_1 \neq s_2$ be strings of length $k$.
\begin{gather*}
    \Pr\left[h(s_1) = h(s_2)\right] = \Pr\left[\sum_{i=0}^{k-1} (s_1[i]-s_2[i]) r^i \equiv 0 \pmod{p}\right] \leq \frac{k}{p}
\end{gather*}
The inequality holds, because the sum on the left side is a non-zero polynomial of degree at most $k$, evaluated in randomly chosen point $r$.
By application of union bound, we obtain that the probability of failure is at most $m \log m / p$.
Assuming operations on hashes of size $\bigO(\log m)$ are taking constant time,
we can choose $p = \Theta(\text{poly}(m))$ and obtain high probability of success.

\begin{remark}
    \label{remark:las-vegas}
    It is also possible to achieve a Las-Vegas implementation of shift-trees.
    The key observation is that the hash function doesn't need to be associative, i.e. one can hash ``subtrees'' instead of substrings.
    This allows us to replace the hash function with any injective mapping $h(x, y)$ from pairs of hashes into new hashes.
    Only the $\Update$ procedure needs to be adapted to compute the hash $H[i]$ of $i$-th node as $h(H[\Left(i)], H[\Right(i)])$.

    The missing piece is how to implement the mapping $h(x, y)$.
    This can be done by simply generating it on demand, and storing the mapping in a hashtable.
    Some garbage collection mechanism (such as reference counting) is required to maintain linear memory usage.
    This yields a Las-Vegas implementation of shift-trees, with the same expected runtime bounds.

    By replacing hashtable with a BST, one can obtain a deterministic implementation of the data structure.
    Such modification introduces logarithmic runtime overhead.
    In the next section, we improve upon this by allowing amortization.
\end{remark}

\section{Deterministic shift-trees}
\label{sec:deterministic-shift-tree}

\subsection{Overview}

We now provide a deterministic variant of the data structure introduced in previous section.
Let $m = 2^n$.
Assume that we maintain several shift-trees $T_1, ..., T_r$ associated
with strings $s_1, ..., s_r \in \Sigma^m$ of the same length, allowing comparisons between those strings.
Let $K = |s_1| + ... + |s_r| = mr$, and let $T_i$ be one of the trees.
Then, we can do the shift-tree operations on $T_i$ with the following amortized time complexities:
\begin{itemize}
    \item $\Init$: $\bigO(m \cdot \alpha(K))$;
    \item $\Set$: $\bigO(\log m \cdot \alpha(K))$;
    \item $\Shift(k)$: $\bigO(m / 2^j \cdot \alpha(K))$, where $j$ is the largest integer such that $2^j \mid k$;
    \item $\Diff$: $\bigO((d+1) \log m \cdot \alpha(K))$, where $d$ is the number of differences.
\end{itemize}
Moreover, the data structures use $\bigO(K)$ memory overall.
The only constraint that we put on alphabet $\Sigma$ is the support for equality tests.
This contrasts with the randomized variant, where an integer alphabet of polynomial size is required.

\subsection{Tags}

We replace hashing with the concept of \emph{tags}.
Tags are simply identifiers associated with strings.
Unlike hashes, they are not unique: one string can be represented by multiple tags.
Each inner node of the shift-tree stores a tag instead of a hash.
A new tag is created every time a node is updated.

We denote the string associated with a tag $t$ by $\Str(t)$.
The strings associated with tags are not stored in memory.
Instead, we maintain an equivalence relation $\mathcal{R}$ over the set of tags used in all the shift-trees.
The relation $\mathcal{R}$ satisfies the following property:
\begin{invariant}\label{inv:tags}
    For each tag $a$ and tag $b$ such that $a \equiv_\mathcal{R} b$, the strings $\Str(a)$ and $\Str(b)$ are equal.
\end{invariant}
Note that the inverse doesn't need to hold.
In other words, the relation $\mathcal{R}$ partially captures the equality relation between strings associated with tags.
The relation is refined during operations on the shift-trees using the following operations:
\begin{itemize}
    \item $\NewTag()$: Create a new tag $x$ with its own singleton equivalence class, and return $x$.
    \item $\Find(x)$: Given a tag $x$, return identifier of its equivalence class.
    \item
        $\Union(x, y)$: Given tags $x$ and $y$, union their equivalence classes
        (i.e. insert $x \equiv_\mathcal{R} y$ and close the relation transitively).
    \item $\DeleteTag(x)$: Given a tag $x$, remove it from its equivalence class and free its memory.
\end{itemize}
In order to support these operations efficiently, we represent the equivalence classes of $\mathcal{R}$ using a union-find data structure.
A simple extension of standard disjoint-set forest implementation with support for element removal has been proposed by Kaplan et al.\ in \cite{Kaplan2002}.

\begin{theorem}[\cite{Kaplan2002}]
    There exists a data structure that maintains an equivalence relation $\mathcal{R}$ using linear memory under operations
    $\Find$, $\Union$ and $\DeleteTag$ in amortized $\bigO(\alpha(n))$ time, and $\NewTag$ in $\bigO(1)$ time,
    where $n$ is the number of maintained elements.
\end{theorem}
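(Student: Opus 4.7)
The plan is to start from the classical disjoint-set forest with union-by-rank and path compression, which already gives $\Find$ and $\Union$ in amortized $\bigO(\alpha(n))$ and $\NewTag$ in $\bigO(1)$, and then extend it to support $\DeleteTag$ without breaking the $\alpha$ bound. The textbook implementation of the first three operations carries over verbatim: $\NewTag$ allocates a fresh node that is its own parent with rank $0$; $\Find$ walks parent pointers to the root and compresses the path; $\Union$ links the root of smaller rank under the root of larger rank, updating ranks in the standard way. For these, Tarjan's classical amortized analysis applies off the shelf.

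The delicate step is $\DeleteTag$, since a deleted tag may be an internal node whose descendants still need a coherent parent pointer. My approach is to mark deleted nodes as \emph{tombstones}: they remain in the forest as structural vertices so that $\Find$ and $\Union$ keep working, but they no longer represent any live element. To allow $\Find$ to return a stable identifier for a class even when its root is tombstoned, I would attach to each root an auxiliary pointer to some live element of its tree, updating this pointer lazily whenever a $\Find$ or a rebuild exposes a fresh live candidate. To prevent tombstones from ballooning both the memory and the tree depths, I would apply a lazy rebuilding rule: whenever the number of tombstones inside a tree exceeds, say, half of its current size, collect all live nodes of that tree and reassemble them into a small-rank structure (for concreteness, a star rooted at an arbitrary live representative, whose rank is reset accordingly).

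The main obstacle is the amortized analysis. Tarjan's proof that $\Find$ with path compression and union-by-rank runs in $\bigO(\alpha(n))$ per operation accounts credit against edges keyed to the ranks of their endpoints, and one has to verify that tombstones and rebuilds do not invalidate this accounting. I would argue this in two steps. First, between consecutive rebuilds of a given tree, no rank ever changes because of a deletion, so the classical potential argument applies unchanged to the whole forest (live nodes and tombstones together), yielding $\bigO(\alpha(K))$ amortized time per $\Find$ and $\Union$ in terms of the total population $K$ of live plus tombstoned nodes. Second, I would introduce an additional potential proportional to the number of tombstones present; each $\DeleteTag$ raises this potential by one, and a rebuild of a tree with $\Theta(s)$ tombstones releases $\Theta(s)$ units of potential, paying for the $\bigO(s)$ work of traversing and reassembling the live nodes. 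Together these two potentials give $\bigO(\alpha(K))$ amortized time for $\Find$, $\Union$, and $\DeleteTag$, while $\NewTag$ remains worst-case $\bigO(1)$.

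Finally I would verify the memory bound: since a tree is rebuilt whenever tombstones outnumber live nodes, the number of tombstones in the forest never exceeds the number of live tags by more than a constant factor, so the total space is $\bigO(K)$. The deepest subtlety in this plan is reconciling the rank resets performed during a rebuild with Tarjan's accounting, but because a rebuild can only decrease ranks and strictly decreases the tombstone potential by the same order as the work it performs, the combined potential is nonnegative and the amortized bounds go through.
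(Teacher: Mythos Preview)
The paper does not prove this theorem; it is quoted from \cite{Kaplan2002} and accompanied only by a one-sentence summary: ``elements to be deleted are marked and the union-find trees are rebuilt if the fraction of marked elements is greater than half.'' Your proposal follows exactly this lazy-deletion-plus-rebuild strategy and fleshes out the analysis, so it is aligned with what the paper cites rather than differing from it.
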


Their approach is based on lazy deletions:
elements to be deleted are marked and the union-find trees are rebuilt if the fraction of marked elements is greater than half.
More involved approaches with constant time deletions have been known in literature \cite{Alstrup2005,Ben2011},
but such improvement doesn't change the amortized time complexity of our data structure.

The idea to use union-find data structure for detecting mismatches has been already proposed by Gawrychowski et al. in \cite{Gawrychowski2016}.

\begin{remark*}
    In general, the $\DeleteTag$ operation is not only useful for space optimization.
    If unused tags are not removed, the complexity of operations is dependent on the total number of $\NewTag$ calls, which can be large.
    This is not an issue if only $\bigO(\text{poly}(K))$ tags are created in total, because $\bigO(\alpha(\text{poly}(K))) = \bigO(\alpha(K))$.
\end{remark*}

\subsection{Operations}

We now adapt the $\Update$ procedure and $\Diff$ operation to work with tags.
The $\Init$, $\Set$ and $\Shift$ operations use the $\Update$ primitive and don't require changes.

\subparagraph*{Update.}

We simply create a new tag for the updated node.
If the node already contains a tag (i.e. the $\Update$ is called after initialization), we delete it in order to maintain linear memory usage.
The newly created tag is in a singleton equivalence class of $\mathcal{R}$, so it trivially satisfies the invariant.

\begin{algorithm}[H]
    \caption{The $\Update$ procedure}
    \begin{algorithmic}[1]
        \Function{Update}{$i$}
            \If{$H[i] \neq \text{null}$}
                \State $\DeleteTag(H[i])$
            \EndIf
            \State $H[i] \gets \NewTag()$
        \EndFunction
    \end{algorithmic}
\end{algorithm}

\subparagraph*{Diff.}

We adapt the $\FindDifferences$ procedure as follows.
Assume that we are looking for differences in interval $[a;b]$.
Let $t_1$ and $t_2$ be tags in compared tree nodes, and $[x;y]$ the interval associated with these nodes.
If $t_1 \equiv_\mathcal{R} t_2$ then the compared substrings are equal, so we exit instantly.
Otherwise, we search for differences recursively in the left and right subtrees.
If no differences are found and $[x;y] \subseteq [a;b]$, we know that $\Str(t_1) = \Str(t_2)$,
so we can safely add $t_1 \equiv_\mathcal{R} t_2$ to relation $\mathcal{R}$ via $\Union(t_1, t_2)$.

\begin{algorithm}[H]
    \caption{The $\FindDifferences$ procedure}
    \begin{algorithmic}[1]
        \Function{FindDifferences}{$T, Q, a, b, i, j, x, y$}
            \If{$[a;b] \cap [x;y] = \emptyset$} \Comment{Check if we are outside of the search interval}
                \State \Return $\emptyset$
            \EndIf
            \If{$x = y$} \Comment{The nodes $i$ and $j$ are leaves if they represent an unit interval}
                \If{$T.H[i] \neq Q.H[j]$} \Comment{The leaves contain letters -- compare them directly}
                    \State \Return $\{x\}$ \label{found-diff}
                \Else
                    \State \Return $\emptyset$
                \EndIf
            \EndIf
            \If{$\Find(T.H[i]) = \Find(Q.H[j])$} \Comment{The inner nodes contain tags -- use $\mathcal{R}$.}
                \State \Return $\emptyset$ \Comment{$T.H[i] \equiv_\mathcal{R} Q.H[j]$ holds, so the strings are equal}
            \EndIf
            \State $z \gets \frac{x+y+1}{2}$ \Comment{The left nodes represent $[x;z-1]$ and the right nodes represent $[z;y]$}
            \State $A \gets \FindDifferences(T, Q, a, b, T.\Left(i), Q.\Left(j), x, z-1)$
            \State $B \gets \FindDifferences(T, Q, a, b, T.\Right(i), Q.\Right(j), z, y)$
            \If{$A \cup B = \emptyset$ and $[x;y] \subseteq [a;b]$} \label{union-cond}
                \State $\Union(T.H[i], Q.H[j])$
            \EndIf
            \State \Return $A \cup B$
        \EndFunction
    \end{algorithmic}
\end{algorithm}

We now briefly address the correctness of the $\FindDifferences$ procedure. Observe that if the condition $[a;b] \cap [x;y] \neq \emptyset$ holds then:
\begin{romanenumerate}
    \item the invariant \ref{inv:tags} guarantees that the procedure will recur if substrings are not equal;
    \item the procedure returns a difference for leaves iff their corresponding characters differ.
\end{romanenumerate}
It follows by an easy induction on the level that the procedure returns all positions in $[a;b]$ where the strings differ and nothing more.
Moreover, the procedure doesn't break the invariant when modifying the relation $\mathcal{R}$:
if the condition in line \ref{union-cond} is true, then there are no differences between compared substrings.

\subsection{Running time}

Let $K$ be the sum of lengths of strings maintained by all shift-trees and let $m = 2^n$ be the length of each string.
We first note that the number of elements maintained by relation $\mathcal{R}$ never exceeds the total number of nodes in shift-trees, which is $\bigO(K)$,
so any operation on $\mathcal{R}$ works in amortized $\bigO(\alpha(K))$ time.

We now argue the amortized running time of the $\Update$ and $\Diff$ operations.
Let the actual cost of the $i$-th operation be a number $c_i$ of operations on the relation $\mathcal{R}$.
Let $q_i$ be the number of equivalence classes of relation $\mathcal{R}$ after $i$ operations.
We define potential $\Phi_i$ to be $9q_i$.
Clearly, $\Phi_i \geq \Phi_0 = 0$.
The amortized cost of the $i$-th operation is $\widehat{c_i} = c_i + \Phi_i - \Phi_{i-1}$.

The actual cost of an $\Update$ operation is $c_i \leq 2$.
The operation creates at most one new equivalence class, thus the amortized cost is $\widehat{c_i} = c_i + \Phi_i - \Phi_{i-1} \leq 2 + 9 \cdot 1 = 11$.
It follows that the amortized time complexity of an $\Update$ operation is $\bigO(\alpha(K))$.

To estimate the amortized cost of $\Diff$ operation, we consider the $\FindDifferences$ calls that recurred.
We say that the call is \emph{wasted} if the compared strings are equal, but comparison of tags reported that they are not.
Otherwise, if the compared strings are not equal and the procedure recurred, we say that the call is \emph{required}.
Let $w$ be the number of wasted calls and $r$ be the number of required calls.
We can charge the calls that didn't recur to their parents, so the total number of $\FindDifferences$ calls is bounded by $3(r+w)$.
Each call does at most three operations on relation $\mathcal{R}$, so the cost of $\Diff$ operation is $c_i \leq 9(r+w)$.

Let $d$ be the number of differences that have been found.
We can bound the number of required calls $r$ by $\bigO((d+1) \log m)$, the same way as in hashing-based shift-trees.
We now focus our attention on the wasted calls.
Assume that we are looking for differences in interval $[a;b]$.
Let $w'$ be the number of wasted calls such that the interval associated with compared nodes is contained within $[a;b]$.
Notice that upon return, each such call unions two different equivalence classes, thus decreasing the potential by $9$.
It follows that $\Diff$ operation reduces the potential in total by $9w'$.
On the other hand, the intervals associated with the remaining wasted calls must contain $a$ or $b$, so there are at most $2$ such calls for each tree level.
It follows that the number of all wasted calls is bounded by $w' + 2\log m$.
The amortized cost of $\Diff$ operation is then:
\begin{gather*}
    \widehat{c_i} \leq 9r + 9w - 9w' \leq 9r + 18\log m = \bigO((d+1) \log m)
\end{gather*}
and the amortized time complexity is $\bigO((d+1) \log m \cdot \alpha(K))$.

We complete analysis by providing amortized running time of $\Init$, $\Set$ and $\Shift$ operations.
The $\Init$ operation calls $\Update$ operation $\bigO(m)$ times, so its amortized running time is $\bigO(m \cdot \alpha(K))$.
By the same argument we obtain the required amortized complexities for $\Set$ and $\Shift$ operation.

\section{Traversing all cyclic shifts}
\label{sec:traversing-cyclic-shifts}

In this section, we consider a problem of going over all the cyclic shifts of the shift-tree efficiently, in some order.
This means that we want to consider all shifts $s^{+\sigma(0)}, ..., s^{+\sigma(|s|-1)}$, for $\sigma$ being some permutation of $\{0, ..., |s|-1\}$.
This requires invoking $\Shift(\sigma(i) - \sigma(i-1))$ for $i = 1,2,...,|s|-1$, assuming the shift-tree initially represents $s^{+\sigma(0)}$.
We claim that there exists a permutation $\sigma$ such that the total complexity of these operations amortizes to $\bigO(m \log m)$ time.
For simplicity, we consider the hashing-based shift-trees; the complexity for deterministic variant is just multiplied by $\alpha(K)$.
This technique is crucial for our Modular Subset Sum algorithm and might be used for other problems, where the order of operations doesn't matter.

The time complexity of a single $\Shift$ operation depends heavily on the value of shift.
Recall that the running time of $\Shift(k)$ is $\bigO(m / 2^j)$, where $j$ is the largest integer such that $2^j \mid k$, and $m = 2^n$ is the size of the shift-tree.
For example, a shift by $1$ requires rebuilding the entire tree, while shift by $m/2$ takes constant time.
It means that the complexity of going over all the cyclic shifts heavily depends on the permutation $\sigma$.

It turns out that a good permutation is a bit-reversal permutation, which we define as follows.
We denote the bit-reverse of $n$-bit number $j$ by $\Bitrev_n(j)$,
i.e. if $j = \sum_{i=0}^{n-1} c_i2^i$ then $\Bitrev_n(j) = \sum_{i=0}^{n-1} c_i2^{n-i-1}$.
We say that $\sigma$ is a \emph{bit-reversal permutation} if $\sigma(i) = \Bitrev_n(i)$.

\begin{lemma}
    \label{lem:shift-amortization}
    Let $\sigma$ be a bit-reversal permutation of length $m = 2^n$ and $T$ a shift-tree of length $\bigO(m)$.
    The sequence of operations $T.\Shift(\sigma_i - \sigma_{i-1})$ for $i = 1, ..., m-1$ takes total time $\bigO(m \log m)$.
\end{lemma}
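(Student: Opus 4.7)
The plan is to reduce the claim to a direct arithmetic estimate on the 2-adic valuation of consecutive differences of the bit-reversal permutation. Recall that the cost of $T.\Shift(k)$ is $\bigO(m/2^j)$, where $2^j$ is the largest power of two dividing $k$. So if I set $k_i = \sigma(i) - \sigma(i-1) = \Bitrev_n(i) - \Bitrev_n(i-1)$ and let $j_i$ be the 2-adic valuation of $k_i$ modulo $2^n$, then the total running time is a sum of terms of the form $m/2^{j_i}$, and everything comes down to determining $j_i$.

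The key step is to compute $j_i$ in closed form. I would argue by examining what happens when an $n$-bit counter is incremented from $i-1$ to $i$. Let $t = \nu_2(i)$, the number of trailing zeros of $i$; equivalently, $i-1$ ends with exactly $t$ trailing ones followed by a $0$. Incrementing flips those $t$ trailing ones to zeros and flips the next bit from $0$ to $1$, while the top $n-t-1$ bits are unchanged. Passing through $\Bitrev_n$ relocates these changes to the top of the reversed numbers: the top $t$ bits of $\Bitrev_n(i-1)$ (all ones) become zero in $\Bitrev_n(i)$, and the bit at position $n-t-1$ flips from $0$ to $1$, while the bottom $n-t-1$ bits are identical in both. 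A direct calculation then gives
\begin{gather*}
    \Bitrev_n(i) - \Bitrev_n(i-1) = 2^{n-t-1} - \left(2^n - 2^{n-t}\right) = 3 \cdot 2^{n-t-1} - 2^n,
\end{gather*}
which is an odd multiple of $2^{n-t-1}$ modulo $2^n$. Hence $j_i = n - \nu_2(i) - 1$, and the cost of the $i$-th shift is $\bigO(2^{\nu_2(i)+1})$.

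It remains to sum these costs. I would group the indices by their 2-adic valuation: for each $k \in \{0, \ldots, n-1\}$, the number of $i \in \{1, \ldots, m-1\}$ with $\nu_2(i) = k$ is at most $m / 2^{k+1}$ (these are the odd multiples of $2^k$ that are below $m$). Therefore
\begin{gather*}
    \sum_{i=1}^{m-1} 2^{\nu_2(i)+1} \leq \sum_{k=0}^{n-1} \frac{m}{2^{k+1}} \cdot 2^{k+1} = nm = \bigO(m \log m),
\end{gather*}
which yields the desired bound.

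I expect the only subtle point to be the bit-level bookkeeping in the closed-form computation of $\Bitrev_n(i) - \Bitrev_n(i-1)$; once that is in place, the valuation identity $j_i = n - \nu_2(i) - 1$ is immediate and the summation is routine. Since the deterministic variant only multiplies each shift cost by $\alpha(K)$, the same argument gives $\bigO(m \log m \cdot \alpha(K))$ in that setting without modification.
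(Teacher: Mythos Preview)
Your proof is correct and follows essentially the same approach as the paper: both arguments identify that $\nu_2(\sigma_i - \sigma_{i-1}) = n - \nu_2(i) - 1$ (you via an explicit closed form $3\cdot 2^{n-t-1}-2^n$, the paper via the observation that the least significant differing bit of $\sigma_i,\sigma_{i-1}$ corresponds under bit-reversal to the most significant differing bit of $i,i-1$), then group indices by $\nu_2(i)$ and observe that each of the $n$ groups contributes $\bigO(m)$ total cost. Your explicit computation of the difference is slightly more detailed than the paper's, but the structure and the final summation are identical.
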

\begin{proof}
    Let $\delta_i = \sigma_i - \sigma_{i-1}$ and consider a single $\Shift(\delta_i)$ operation.
    Let $j$ be the largest integer such that $2^j \mid \delta_i$.
    The complexity of this operation is then $\bigO(m / 2^j) = \bigO(2^{n-j})$.
    If $2^j \mid \delta_i$ and $2^{j+1} \nmid \delta_i$ then $j$ is the least significant bit that is different between $\sigma_i$ and $\sigma_{i-1}$.
    Since $\sigma_i$ is a bit-reverse of $i$, it means that $n-j-1$ is the most significant bit that is different between $i$ and $i+1$.
    Such situation happens only if $i+1$ is of form $k \cdot 2^{n-j-1}$, where $k$ is odd.
    There are $2^j$ such numbers in range $[1; 2^n-1]$.
    It follows that shifts for a given value of $j$ take overall $\bigO(2^{n-j} \cdot 2^j) = \bigO(2^n)$ time.
    There are $\bigO(n)$ possible values of $j$, so the whole sequence of shifts takes $\bigO(2^n \cdot n) = \bigO(m \log m)$ time.
\end{proof}

\section{Modular Subset Sum}
\label{sec:modular-subset-sum}

In this section, we provide an algorithm for the Modular Subset Sum problem that uses the shift-tree data structure.
Let $X = \{x_1, ..., x_n\}$ be a multiset of integers from $\mathbb{Z}_m$.
Our algorithm computes the set $X^* \subseteq \mathbb{Z}_m$ such that $k \in X^*$
if and only if there exists a subset of $X$ that sums to the value $k$ modulo $m$.
We assume that the input multiset $X$ is provided in a compact form: as a list of $\bigO(m)$ distinct elements along with their multiplicities.

The algorithm is based on the so-called Bellman's iteration.
Consider sets $S_0, ..., S_n \subseteq \mathbb{Z}_m$ such that $S_0 = \{0\}$ and $S_i = S_{i-1} \cup \left(S_{i-1} + x_i\right)$,
where $S_{i-1} + x_i = \{a+x_i : a \in S_{i-1}\}$.
It is easy to see that the set $S_i$ is a set of all attainable subset sums of $\{x_1, ..., x_i\}$ and the final result is $X^* = S_n$.
In order to compute the set $S_i$ from $S_{i-1}$, it is sufficient to find the set $C_i = S_i \setminus S_{i-1} = \left(S_{i-1} + x_i\right) \setminus S_{i-1}$.
If one can compute the set $C_i$ in time $\bigO(|C_i| \cdot f(m))$, then the set $X^*$ can be computed in total time $\bigO(m \cdot f(m))$.

The key idea of \cite{Axiotis2019} is to notice that instead of computing $C_i = \left(S_{i-1} + x_i\right) \setminus S_{i-1}$,
we can compute the symmetric difference $D_i = \left(S_{i-1} + x_i\right) \triangle S_{i-1}$.
Computing the set $D_i$ doesn't break the time complexity, because its size is only two times larger than $C_i$.
We can now interpret problem of finding the set $D_i$ as a text problem \cite{Axiotis2021,Cardinal2021}.
Let $s_i \in \{0, 1\}^m$ be the characteristic vector of the set $S_i$, i.e. $s_i[j] = 1$ iff $j \in S_i$.
The problem of finding the set $D_{i+1}$ is then reduced to problem of finding differences between the strings $s_i$ and $s_i^{+x_{i+1}}$.
The set $D_{i+1}$ is exactly the set of indices, where these strings differ.

We now describe our algorithm.
Let $S$ be the set of all subset sums attainable using elements processed so far, and let $s$ be its characteristic vector.
Initially, the set $S$ contains only $0$.
We maintain the characteristic vector $s$ and its cyclic shift using two shift-trees, $T_1$ and $T_2$ respectively.
The length of a string maintained by a shift-tree is required to be a power of two, but that may not be the case with the string $s$.
We address this issue in the following way.
Let $L$ be the smallest power of two such that $L \geq 2m$.
We consider the following auxiliary strings:
\begin{align*}
    s' &= s0^{L-m} \\
    s'' &= s0^{L-2m}s
\end{align*}
The strings $s'$ and $s''$ have length $L$, which is a power of two.
Moreover, the string $s''$ has the following property: the string $s^{+d}$ is a prefix of $(s'')^{+d}$, for any $d \in \{0, ..., m\}$.
This property allows us to find differences between $s$ and $s^{+d}$ by comparing a prefix of $s'$ with a prefix of $(s'')^{+d}$.

The tree $T_1$ maintains the string $s'$ and the tree $T_2$ maintains a cyclic shift of the string $s''$.
Specifically, we traverse all cyclic shifts of $T_2$ in bit-reversal order, as explained in the previous section.
Assume that the current shift of $T_2$ is $x$, i.e. the string maintained by $T_2$ is $(s'')^{+x}$.
Let $\mu$ be the multiplicity of $x$ in the input multiset $X$.
If $\mu = 0$, then $x \notin X$ and the algorithm proceeds to the next shift.
Otherwise, we simulate $\mu$ Bellman's iterations for the element $x$ as follows.
The $x$ is contained in the multiset $X$, so $x \in \{0, ..., m-1\}$.
It implies that we can find the set of differences $D$ between $s$ and $s^{+x}$ by comparing a prefix of $s'$ with a prefix of $(s'')^{+x}$.
This is done using $\Diff$ operation on $T_1$ and $T_2$.
We then update the set of attainable subset sums $S$ and both shift-trees appropriately.
If no differences were found, we skip the rest of iterations for element $x$.

Every element of $X$ corresponds to some cyclic shift, so all elements will be processed if all the cyclic shifts are considered.
The set $S$ is then the set of all attainable subset sums for $X$.
We provide the pseudocode as Algorithm \ref{alg:modular-subset-sum}.
In the pseudocode, we denote the multiplicity of element $x$ in the set $X$ by $\mu_X(x)$.

\begin{algorithm}[H]
    \caption{The \textsc{ModularSubsetSum} algorithm}
    \label{alg:modular-subset-sum}
    \begin{algorithmic}[1]
        \Function{ModularSubsetSum}{$X, m$}
            \State $S = \{0\}$
            \State $k \gets \min \{ d \in \mathbb{N} : 2^d \geq 2m \}$
            \State $L \gets 2^k$
            \State $s_0 \gets 10^{m-1}$ \Comment{The characteristic vector of $S_0 = \{0\}$}
            \State Initialize shift-tree $T_1$ with $s_00^{L-m}$
            \State Initialize shift-tree $T_2$ with $s_00^{L-2m}s_0$
            \For{$i \gets 1, ..., L-1$}
                \State $x \gets \Bitrev_k(i)$
                \State $T_2.\Shift(x - \Bitrev_k(i-1))$ \Comment{Now $T_2$ represents the string $(s'')^{+x}$}
                \For{$j \gets 1, ..., \mu_X(x)$}
                    \State $D \gets T_1.\Diff(T_2, 0, m-1)$ \Comment{$D$ is the set of all differences between $s$ and $s^{+x}$}
                    \If{$D = \emptyset$}\label{skip-cond} \Comment{Skip the rest of iterations for $x$ if no differences were found}
                        \State \textbf{break}
                    \EndIf
                    \For{$d \in D \setminus S$}
                        \State $S \gets S \cup \{d\}$
                        \State $T_1.\Set(d, 1)$
                        \State $T_2.\Set((d+x) \bmod L, 1)$
                        \State $T_2.\Set((d+x-m) \bmod L, 1)$
                    \EndFor
                \EndFor
            \EndFor
            \State \Return $S$
        \EndFunction
    \end{algorithmic}
\end{algorithm}

We now analyse the running time of the algorithm.
We assume the hashing-based shift-trees are used; the complexity for deterministic variant is just multiplied by $\alpha(m)$.
The initialization of shift-trees takes $\bigO(m)$ time.
The value of $\Bitrev_k(i)$ can be computed naively bit by bit in $\bigO(k) = \bigO(\log m)$ time, which in total takes $\bigO(m \log m)$ time.
Moreover, the total time of all $\Shift$ operations amortizes to $\bigO(m \log m)$ time due to lemma \ref{lem:shift-amortization}.
We now focus on the total running time of inner loops.

Consider a single Bellman's iteration.
The complexity of a $\Diff$ operation is $\bigO((|D|+1) \log m)$.
The algorithm adds $|D \setminus S| = |D|/2$ new elements to the set $S$ and updates the shift-trees.
Each tree update takes $\bigO(\log m)$ time.
In total, a single Bellman's iteration takes $\bigO((|D|+1) \log m)$ time.

The sum of sizes of all the sets of differences is at most $2m$.
It means that if there are $k$ Bellman's iterations in total, then their total running time is $\bigO((m+k) \log m)$.
The condition in the line \ref{skip-cond} ensures that the total number of executed Bellman's iterations is $\bigO(m)$ by skipping the iterations if the set is empty.
It follows that all the iterations take $\bigO(m \log m)$ time in total.

We arrive at the total time complexity of $\bigO(m \log m)$.
By replacing hashing-based shift-trees with their deterministic variant, we obtain a deterministic algorithm with running time of $\bigO(m \log m \cdot \alpha(m))$.
We recall the theorems that summarize these results:

\MainTheoremRandomized*
\MainTheoremDeterministic*

\bibliography{modular-subset-sum}

\end{document}